\begin{document}

\newtheorem{definition}{Definiton}[section]
\newtheorem{theorem}{Theorem}[section]

\title{Classification  of the static and  asymptotically flat  Einstein-Maxwell-dilaton spacetimes with a photon sphere}
\author[1,2]{Stoytcho Yazadjiev\thanks{yazad@phys.uni-sofia.bg} }
\author[1]{Boian Lazov\thanks{boian\_lazov@phys.uni-sofia.bg}}
\affil[1]{Department of Theoretical Physics, Faculty of Physics, Sofia University, Sofia 1164, Bulgaria}
\affil[2]{Theoretical Astrophysics, Eberhard-Karls University of T\"ubingen, T\"ubingen 72076, Germany}

\maketitle

\begin{abstract}
We consider the problem for the  classification  of static and asymptotically flat Einstein-Maxwell-dilaton  spacetimes with a photon sphere.
It is first proven that the photon spheres in  Einstein-Maxwell-dilaton gravity have constant mean and constant scalar curvature. Then we derive some
relations between the mean curvature and the physical characteristics of the photon spheres. Using further the symmetries of the dimensionally reduced
Einstein-Maxwell-dilaton field equations we show that the lapse function, the electrostatic potential and the dilaton field are functionally dependent
in the presence of a photon sphere.  Using all this we prove  the main classification theorem by explicitly constructing all Einstein-Maxwell-dilaton  solutions possessing a non-extremal
photon sphere.
\end{abstract}

\section{Introduction}

\indent Photon spheres are a very important result in the gravitational theories \cite{Iyer1985}-\cite{Baldiotti2014}. They are closely connected to the strong gravitational
lensing \cite{Virbhadra1998}-\cite{Bozza2010} and the black hole quasi-normal modes \cite{Cardoso2009}-\cite{Decanini2010}, and provide a bridge between them \cite{Stefanov2010}-\cite{Wei2014}. Because of this photon spheres are of great importance to astrophysics. Furthermore, photon spheres have very interesting mathematical properties in spite of their simple definition.\\
\indent There is a physical and mathematical resemblance between event horizons and photon spheres, which focuses attention on the question whether spacetimes can be classified using the existence of a photon sphere instead of an event horizon. This has already been done for the vacuum Einstein equations \cite{Cederbaum2014},\cite{Cederbaum2015a}, Einstein-scalar field equations \cite{Yazadjiev2015} and Einstein-Maxwell equations \cite{Yazadjiev2015a,Cederbaum2015b}, although in general photon sphere uniqueness is a harder problem than event horizon uniqueness, because a larger class of solutions possesses the former compared to the latter.\\
\indent The present paper aims to prove a classification  theorem for the static and asymptotically flat solutions of the Einstein-Maxwell-dilaton (EMD) field equations, containing a photon sphere. This comes as a logical continuation of our previous works \cite{Yazadjiev2015, Yazadjiev2015a}, which proved similar theorems for the  cases of Einstein-scalar field and Einstein-Maxwell equations. Some sectors of  EMD gravity  naturally arise in the context of  low energy string theory \cite{Gibbons1988, Garfinkle1992}, Kaluza-Klein gravity \cite{Maison1979} and also in some theories with gradient spacetime torsion \cite{Hojman1978}.

\section{Definitions and preparation}

\indent Let us denote our spacetime manifold by $(\mathfrak{L}^4,\mathfrak{g})$. The field equations for Einstein-Maxwell-dilaton  gravity are as follows:
\begin{align}
&\mathfrak{R}_{\mu\nu}=2\,{}^{\mathfrak{g}}\!\nabla_{\mu}\varphi\,{}^{\mathfrak{g}}\!\nabla_{\nu}\varphi+ 2e^{-2\alpha\varphi}\left( F_{\mu\beta}F^{\ \beta}_{\nu} -\frac{\mathfrak{g}_{\mu\nu}}{4}F_{\beta\gamma}F^{\beta\gamma} \right),\label{ems1}\\
&{}^{\mathfrak{g}}\!\nabla_{[\beta}F_{\mu\nu]}=0,\label{ems2}\\
&{}^{\mathfrak{g}}\!\nabla_{\beta}\left( e^{-2\alpha\varphi}F^{\beta\mu} \right)=0,\label{ems3}\\
&{}^{\mathfrak{g}}\!\nabla_{\beta}\,{}^{\mathfrak{g}}\!\nabla^{\beta}\varphi=-\frac{\alpha}{2}e^{-2\alpha\varphi}F_{\mu\nu}F^{\mu\nu}.\label{ems4}
\end{align}

\noindent We are using the usual notation. Namely, $\mathfrak{R}_{\mu\nu}$ is the spacetime Ricci tensor, $\varphi$ is the dilaton field, $F_{\mu\nu}$ is the Maxwell field and $\alpha$ denotes the dilaton coupling constant.\\
\indent Static spacetimes can be decomposed as follows. There exist a smooth Riemannian manifold $(M^3,g)$ and a smooth lapse function $N:M^3\longrightarrow \mathbb{R}^+$ such that
\begin{align}
\mathfrak{L}^4=\mathbb{R}\times M^3,\ \mathfrak{g}=-N^2\mathrm{d}t^2+g.\label{statg}
\end{align}

\indent Staticity of the Maxwell and scalar fields is defined using the timelike Killing vector $\xi=\frac{\partial}{\partial t}$,
\begin{align}
&\mathcal{L}_{\xi}F=0,\\
&\mathcal{L}_{\xi}\varphi=0.
\end{align}

\noindent We will be considering the case with $\iota_{\xi}\star F=0$, i. e. the purely electric case.\\
\indent We will be proving our theorem for asymptotically flat spacetimes, which are defined in the following way: A spacetime is asymptotically flat if there exists a compact set $K\in M^3$ such that $M^3\setminus K$ is diffeomorphic to $\mathbb{R}^3\setminus \bar{B}$, where $\bar{B}$ is the closed unit ball centered at the origin of $\mathbb{R}^3$ and such that
\begin{align}
g=\delta+O(r^{-1}),\ N=1-\frac{M}{r}+O(r^{-2}),
\end{align}

\noindent with respect to the radial coordinate on $\mathbb{R}^3$. In the above expression $M$ denotes the mass. Asymptotic expansions of the Maxwell and scalar fields will be given later.\\
\indent We will now define the photon surface.
\begin{definition}
An embedded timelike hypersurface $(P^3,p)\hookrightarrow(\mathfrak{L}^4,\mathfrak{g})$ is called a photon surface if any null geodesic initially tangent to $P^3$ remains tangent to $P^3$ as long as it exists.
\end{definition}

\noindent The definition of the photon sphere is naturally modified keeping in mind the cases with Maxwell and scalar fields.
\begin{definition}
Let $(P^3,p)\hookrightarrow(\mathfrak{L}^4,\mathfrak{g})$ be a photon surface. Then $P^3$ is called a photon sphere if the lapse function $N$ is constant on $P^3$ and the one-forms $\iota_{\xi}F$ and $\mathrm{d}\varphi$ are normal to $P^3$.
\end{definition}

\indent Additionally, we assume that the lapse function $N$ regularly foliates the exterior to the photon sphere region of spacetime, which has spatial part denoted by $M^3_{\mathrm{ext}}$, i. e.
\begin{align}
\rho^{-2}=g({}^g\!\nabla N,{}^g\!\nabla N)\ne 0
\end{align}

\noindent outside the photon sphere. $M^3$ has as inner boundary the intersection $\Sigma$ of the outermost photon sphere with the time slice $M^3$. $\Sigma$ is given by $N=N_0$ for some $N_0\in \mathbb{R}^+$. Our assumption implies that all level sets $N=\mathrm{const}$, including $\Sigma$, are topological spheres and $M^3_{\mathrm{ext}}$ is topologically $S^2\times \mathbb{R}$.\\
\indent The electric field one-form $E$ is defined by
\begin{align}
E=-\iota_{\xi}F
\end{align}

\noindent and it satisfies $\mathrm{d}E=0$, which follows from the field equations and the staticity of the Maxwell field. As a consequence of the fact that $M^3_{\mathrm{ext}}$ is simply connected there exists an electric potential $\Phi$ such that $E=\mathrm{d}\Phi$. With the electric field one-form we can write explicitly an expression for $F$,
\begin{align}
F=-N^{-2}\xi\wedge\mathrm{d}\Phi.
\end{align}

\noindent From the definition of the photon sphere, $E$ is normal to it and thus $\Phi$ is constant on $P^3$.\\
\indent We will now write the asymptotic expansions of the scalar field $\varphi$ and the electric potential $\Phi$,
\begin{align}
&\varphi=\varphi_{\infty}-\frac{q}{r}+O(r^{-2}),\\
&\Phi=\Phi_{\infty}+\frac{Q}{r}+O(r^{-2}),
\end{align}

\noindent where $q$ is the scalar charge and $Q$ is the electric charge. We set $\varphi_{\infty}=0$ and $\Phi_{\infty}=0$.

\indent Using the form of the metric (\ref{statg}) we can obtain the dimensionally reduced EMD equations,
\begin{align}
&{}^g\!\Delta N=N^{-1}e^{-2\alpha\varphi}\,{}^g\!\nabla_i\Phi\,{}^g\!\nabla^i\Phi,\label{emsr1}\\
&{}^g\!R_{ij}=2\,{}^g\!\nabla_i\varphi\,{}^g\!\nabla_j\varphi+N^{-1}\,{}^g\!\nabla_i\,{}^g\!\nabla_jN\label{emsr2}\\
&\ \ \ \ \ \ \ \ +N^{-2}e^{-2\alpha\varphi}( g_{ij}\,{}^g\!\nabla_k\Phi\,{}^g\!\nabla^k \Phi -2\,{}^g\!\nabla_i\Phi \,{}^g\!\nabla_j\Phi ),\notag\\
&{}^g\!\nabla_i(N^{-1}e^{-2\alpha\varphi}\,{}^g\!\nabla^i\Phi)=0,\label{emsr3}\\
&{}^g\!\nabla_i(N\,{}^g\!\nabla^i\varphi)=\alpha N^{-1}e^{-2\alpha\varphi}\,{}^g\!\nabla_i\Phi\,{}^g\!\nabla^i\Phi.\label{emsr4}
\end{align}

\indent Using the maximum principle for elliptic partial differential equations, from the asymptotic behaviour of $N$ for $r\to\infty$ follows that the values of $N$ on $M^3_{\mathrm{ext}}$ satisfy
\begin{align}
N_0\le N<1.
\end{align}

\section{Some important results for the extrinsic and intrinsic geometry of the photon spheres}

\indent We will present here some results that will be used in the proof of the main theorem later. We start by using a theorem by Claudel-Virbhadra-Ellis \cite{Claudel2001}  and Perlick \cite{Perlick2005}.
\begin{theorem}
Let $(P^3,p)\hookrightarrow(\mathfrak{L}^4,\mathfrak{g})$ be an embedded timelike hypersurface. Then $P^3$ is a photon surface if and only if it is totally umbilic (i. e. its second fundamental form is pure trace).
\end{theorem}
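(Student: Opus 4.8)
\section*{Proof proposal}

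The plan is to translate the photon-surface property into a pointwise algebraic condition on the second fundamental form $h$ of $P^3\hookrightarrow(\mathfrak{L}^4,\mathfrak{g})$, using the Gauss formula to split the ambient geodesic equation of a null curve into a tangential and a normal part. Throughout, let $n$ denote the spacelike unit normal of $P^3$ and ${}^{p}\!\nabla$ the Levi-Civita connection of the induced Lorentzian metric $p$; the Gauss formula then reads ${}^{\mathfrak{g}}\!\nabla_X Y = {}^{p}\!\nabla_X Y + h(X,Y)\,n$ for vector fields $X,Y$ tangent to $P^3$.

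First I would prove the implication \emph{totally umbilic $\Rightarrow$ photon surface}. Assume $h=\lambda\,p$ for some function $\lambda$ on $P^3$. Fix $q\in P^3$ and a null vector $u\in T_qP^3$ (such directions exist and fill a cone, since $(P^3,p)$ is Lorentzian), and let $\sigma$ be the $p$-geodesic with $\sigma(0)=q$, $\dot\sigma(0)=u$. By the Gauss formula and ${}^{p}\!\nabla_{\dot\sigma}\dot\sigma=0$,
\begin{align}
{}^{\mathfrak{g}}\!\nabla_{\dot\sigma}\dot\sigma = h(\dot\sigma,\dot\sigma)\,n = \lambda\,p(\dot\sigma,\dot\sigma)\,n .
\end{align}
Since $p(\dot\sigma,\dot\sigma)$ is constant along a $p$-geodesic and equals $p(u,u)=0$, the curve $\sigma$ is an ambient null geodesic; by uniqueness of geodesics it coincides with the ambient null geodesic determined by $(q,u)$, which therefore stays in $P^3$. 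As $q$ and $u$ were arbitrary, $P^3$ is a photon surface.

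For the converse, assume $P^3$ is a photon surface, fix $q\in P^3$ and a null $u\in T_qP^3$, and let $\gamma$ be the ambient null geodesic with $\gamma(0)=q$, $\dot\gamma(0)=u$; by hypothesis $\gamma$ lies in $P^3$, so $\dot\gamma$ is tangent to $P^3$ along $\gamma$. Decomposing $0={}^{\mathfrak{g}}\!\nabla_{\dot\gamma}\dot\gamma$ into tangential and normal parts via Gauss yields, in particular, $h(\dot\gamma,\dot\gamma)=0$ along $\gamma$, so at $t=0$ we get $h_q(u,u)=0$. Thus the symmetric bilinear form $h_q$ on the $3$-dimensional Lorentzian space $(T_qP^3,p_q)$ vanishes on the entire null cone. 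Evaluating $h_q$ on the null vectors $e_0\pm e_i$ and $e_0+\tfrac{1}{\sqrt{2}}(e_1+e_2)$ built from a $p_q$-orthonormal frame $\{e_0,e_1,e_2\}$ forces all components $h_{0i}$ and $h_{12}$ to vanish and $h_{11}=h_{22}=-h_{00}$, i.e. $h_q=\lambda(q)\,p_q$ with $\lambda(q)=-h_{00}$. Hence $h=\lambda\,p$ and $P^3$ is totally umbilic.

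The only genuinely delicate point is the last step, ``$h_q(u,u)=0$ on the null cone $\Rightarrow h_q$ pure trace'', which is exactly where the Lorentzian signature and $\dim P^3=3$ are used; the rest is the Gauss decomposition together with uniqueness of geodesics and the observation that a $p$-geodesic which starts null remains null. I expect that algebraic lemma, and bookkeeping of the tangential/normal split, to be where most of the care is needed, while everything else is routine.
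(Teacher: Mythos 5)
Your argument is correct, but note that the paper does not actually prove this statement: it is imported verbatim from Claudel--Virbhadra--Ellis \cite{Claudel2001} and Perlick \cite{Perlick2005}, and what you have written is essentially the standard proof found in those references. Both directions are sound: the Gauss-formula decomposition ${}^{\mathfrak{g}}\!\nabla_{\dot\sigma}\dot\sigma={}^{p}\!\nabla_{\dot\sigma}\dot\sigma+h(\dot\sigma,\dot\sigma)\,n$ reduces everything to $h(u,u)=0$ on the null cone, and your algebraic lemma (evaluate on $e_0\pm e_i$ and on $e_0+\tfrac{1}{\sqrt2}(e_1+e_2)$) correctly forces $h=\lambda p$; this is indeed the step where the Lorentzian signature of $p$ and the fact that null directions span $T_qP^3$ are essential. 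The one point worth flagging in the umbilic~$\Rightarrow$~photon-surface direction is the mismatch between the maximal domain of the intrinsic $p$-geodesic $\sigma$ and that of the ambient null geodesic with the same initial data: uniqueness gives coincidence only on the common interval of existence, whereas the paper's definition demands tangency ``as long as it exists'' in the ambient sense. This is a standard technicality (handled in \cite{Claudel2001} by a continuation/openness argument on the set of parameters where the geodesic remains in $P^3$), and it would be worth a sentence in a self-contained write-up, but it does not affect the substance of your proof.
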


\noindent This theorem allows us to write the second fundamental form of $P^3$ as $\mathfrak{h}=\frac{\mathfrak{H}}{3}p$, where $\mathfrak{H}$ is the mean curvature of $P^3$. This will be used to prove the following theorem:
\begin{theorem}
Let $(\mathfrak{L}^4,\mathfrak{g},F,\varphi)$ be a static, asymptotically flat spacetime, satisfying the EMD equations (\ref{emsr1}-\ref{emsr4}) and possessing a photon sphere $(P^3,p)\hookrightarrow(\mathfrak{L}^4,\mathfrak{g})$. Then $P^3$ has constant mean curvature (CMC) and constant scalar curvature (CSC).
\end{theorem}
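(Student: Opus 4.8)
The plan is to exploit the total umbilicity of $P^3$ (Theorem 3.1) together with the photon-sphere conditions ($N$, $\Phi$ constant on $P^3$, and $\mathrm{d}\varphi$ normal to $P^3$) and then feed these into the Gauss and Codazzi equations, using the bulk EMD field equations (\ref{ems1}--\ref{ems4}) to control the ambient curvature terms. First I would set up adapted notation: let $\nu$ be the unit spacelike normal to $P^3$ in $\mathfrak{L}^4$, write $\mathfrak{h}=\frac{\mathfrak{H}}{3}p$, and record that since $N|_{P^3}=N_0$ the gradient ${}^{\mathfrak{g}}\nabla N$ is proportional to $\nu$, and similarly $\iota_\xi F$ and $\mathrm{d}\varphi$ are proportional to $\nu$. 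In particular the tangential derivatives of $N$, $\Phi$, $\varphi$ along $P^3$ vanish. The timelike Killing field $\xi$ is tangent to $P^3$ (the photon sphere is a static hypersurface), so $P^3$ itself splits as $\mathbb{R}\times\Sigma$ with $\Sigma=\{N=N_0\}$, and constancy of $\mathfrak{H}$ on $P^3$ reduces to constancy on the $2$-sphere $\Sigma$.

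The key computational step is the contracted Codazzi equation for $P^3\hookrightarrow\mathfrak{L}^4$. For a totally umbilic hypersurface this reads, schematically, $\frac{2}{3}\,p_{ab}\,\partial^a\mathfrak{H}\ \sim\ \mathfrak{R}_{\mu\nu}\,e^\mu_{(a)}\nu^\nu$ contracted appropriately — i.e. the tangential derivative of the mean curvature is governed by the mixed normal-tangential components of the ambient Ricci tensor. Now I would substitute the EMD equation (\ref{ems1}) for $\mathfrak{R}_{\mu\nu}$: the right-hand side is built from ${}^{\mathfrak{g}}\nabla_\mu\varphi\,{}^{\mathfrak{g}}\nabla_\nu\varphi$ and the Maxwell stress $e^{-2\alpha\varphi}(F_{\mu\beta}F_\nu{}^\beta-\tfrac14\mathfrak{g}_{\mu\nu}F_{\beta\gamma}F^{\beta\gamma})$. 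Because $\mathrm{d}\varphi$ is purely normal and $F=-N^{-2}\xi\wedge\mathrm{d}\Phi$ with $\mathrm{d}\Phi$ purely normal, every mixed tangential-normal contraction of the stress tensor vanishes on $P^3$: $\nabla_\mu\varphi\,\nabla_\nu\varphi$ has no mixed part, and $F_{\mu\beta}F_\nu{}^\beta$ contracted with one tangential and one normal leg also vanishes since the only nonzero components of $F$ involve $\xi$ (tangential) and $\mathrm{d}\Phi$ (normal). Hence the mixed Ricci component vanishes on $P^3$, the Codazzi identity forces $\partial_a\mathfrak{H}=0$ along $P^3$, and $\mathfrak{H}$ is constant.

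For the constant scalar curvature claim I would then combine the Gauss equation for $P^3$ with the already-established CMC property. The scalar curvature ${}^pR$ of $(P^3,p)$ is expressed via the Gauss equation in terms of the ambient scalar curvature $\mathfrak{R}$, the ambient Ricci contracted twice with $\nu$, and $|\mathfrak{h}|^2-\mathfrak{H}^2$; the last piece is a constant once $\mathfrak{H}$ is constant (umbilicity gives $|\mathfrak{h}|^2=\mathfrak{H}^2/3$). The ambient scalar curvature $\mathfrak{R}$ is fixed by tracing (\ref{ems1}): $\mathfrak{R}=2\,{}^{\mathfrak{g}}\nabla_\mu\varphi\,{}^{\mathfrak{g}}\nabla^\mu\varphi$ (the Maxwell stress is traceless in four dimensions), and $\mathfrak{R}_{\mu\nu}\nu^\mu\nu^\nu$ similarly reduces to $2(\nu^\mu\nabla_\mu\varphi)^2$ plus a Maxwell term $2e^{-2\alpha\varphi}\,\nu^\mu\nu^\nu(F_{\mu\beta}F_\nu{}^\beta-\tfrac14\mathfrak{g}_{\mu\nu}|F|^2)$, all of which are expressible through the normal derivatives of $N,\Phi,\varphi$. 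To show these normal quantities are constant on $P^3$ I would use the relations between $\mathfrak{H}$ and the physical charges derived just after this theorem in the paper (the analogue of $\rho^{-1}\partial_\nu N$, $\partial_\nu\Phi$, $\partial_\nu\varphi$ being determined algebraically by $N_0$, $\mathfrak{H}$ and the charges), or, more self-containedly, invoke the junction-type relations coming from evaluating the dimensionally reduced equations (\ref{emsr1}--\ref{emsr4}) on $\Sigma$ together with umbilicity of $\Sigma\hookrightarrow M^3$. Either way, every term in the Gauss formula for ${}^pR$ is then constant, giving CSC.

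I expect the main obstacle to be the second half: unlike the CMC statement, which follows cleanly from Codazzi plus the structural vanishing of mixed stress components, establishing CSC requires knowing that the \emph{normal} derivatives $\partial_\nu N$, $\partial_\nu\Phi$, $\partial_\nu\varphi$ are themselves constant over $\Sigma$, which is not immediate from umbilicity alone and genuinely uses the constancy of $N,\Phi,\varphi$ on $P^3$ together with the field equations restricted to $P^3$ (effectively a Bochner/constraint argument on the hypersurface). Care is also needed with signs and with the fact that $P^3$ is timelike (so the induced metric $p$ is Lorentzian and $\nu$ is spacelike), but since $\xi$ is tangent and Killing all the $t$-dependence drops out and the analysis localizes on the Riemannian $2$-sphere $\Sigma$, which keeps the curvature bookkeeping manageable.
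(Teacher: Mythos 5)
Your CMC argument coincides with the paper's: the contracted Codazzi equation for the totally umbilic $P^3$ gives $\mathfrak{R}(Y,\nu)\propto Y(\mathfrak{H}/3)$ for tangential $Y$, and the mixed tangential--normal components of the right-hand side of (\ref{ems1}) vanish because $\mathrm{d}\varphi$ and $\mathrm{d}\Phi$ are normal to $P^3$ while $\xi$ is tangent; hence $\mathfrak{H}$ is constant. That half is correct and is exactly the paper's route.

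The gap is in the CSC half. The Gauss equation reduces ${}^p\!R$ to $\tfrac{2}{3}\mathfrak{H}^2-2({}^{\mathfrak{g}}\!\nabla_{\nu}\varphi)^2+2N^{-2}e^{-2\alpha\varphi}E_{\nu}^2$, so what remains is precisely the constancy of $\nu(\varphi)$ and $E_\nu=\nu(\Phi)$ over $\Sigma$ --- and neither of the two routes you sketch delivers it. The charge relations (\ref{mass}), (\ref{sccharge}) are surface integrals and carry no pointwise information. The ``junction'' route does work for $\nu(N)$, because (\ref{emsr2}) supplies a Hessian equation for $N$: contracting ${}^g\!\nabla^2N$ with one tangential and one normal leg, the mixed Ricci term vanishes by the Codazzi equation for $\Sigma\hookrightarrow M^3$ and the mixed matter terms vanish by normality of $\mathrm{d}\varphi$ and $\mathrm{d}\Phi$, whence $X(\nu(N))=0$. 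But there is no analogous Hessian equation for $\Phi$ or $\varphi$: equations (\ref{emsr3}) and (\ref{emsr4}) are single scalar divergence identities, which constrain only the normal derivative of the (weighted) fluxes $\nu(\Phi)$, $\nu(\varphi)$, not their tangential variation along $\Sigma$. The paper closes this gap non-locally: in Section 4 it proves, via the target-space Killing-vector divergence identities integrated over $M^3_{\mathrm{ext}}$ and the subsequent $\int e^{2U}\omega_i\omega^i=0$ argument, that $\Phi$ and $\varphi$ are functions of $N$ on all of $M^3_{\mathrm{ext}}$; combined with $N=N_0$ on $\Sigma$ and $X(\nu(N))=0$ this makes $E_\nu$ and $\nu(\varphi)$ constant. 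Without that global functional dependence (or an equivalent substitute, which your local/restricted-field-equation argument does not provide), the CSC claim is not established.
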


\begin{proof}
To prove the theorem first we have to use the Codazzi equation for $(P^3,p)\hookrightarrow(\mathfrak{L}^4,\mathfrak{g})$ with unit normal $\nu$. Straightforward calculations (see for example \cite{Cederbaum2014}, \cite{Yazadjiev2015}, \cite{Yazadjiev2015a}) lead to
\begin{align}
\mathfrak{R}(Y,\nu)=(1-3)Y\left( \frac{\mathfrak{H}}{3} \right)=0.
\end{align}

\noindent We have used the field equations (\ref{ems1}) to calculate $\mathfrak{R}(Y,\nu)$ with $Y$ being a tangent vector to $P^3$. This shows that $P^3$ has CMC.\\
\indent Next we will use the contracted Gauss equation, again for $(P^3,p)\hookrightarrow(\mathfrak{L}^4,\mathfrak{g})$. After some simple calculations (similar to \cite{Yazadjiev2015a}) we arrive at
\begin{align}
{}^p\!R=\frac{2}{3}\mathfrak{H}^2-2({}^{\mathfrak{g}}\!\nabla_{\nu}\varphi)^2+e^{-2\alpha\varphi}\frac{2}{N^2}E_{\nu}^2,\label{CSCP}
\end{align}

\noindent where $E_{\nu}=\iota_{\nu}E$. Now to prove that $P^3$ has CSC we need to prove that $E_{\nu}=\mathrm{const}$ and ${}^{\mathfrak{g}}\!\nabla_{\nu}\varphi=\mathrm{const}$ on $P^3$. Below we show that $\nu(N)$ is constant on $P^3$ and  in the next section we show that $\varphi$ and $\Phi$ are functions of $N$. Therefore $E_{\nu}$ and ${}^{\mathfrak{g}}\!\nabla_{\nu}\varphi$ are constant on $P^3$. With this the proof is complete.
\end{proof}

\indent Next we will obtain relations for the mass, the electric charge and the scalar charge on the photon sphere. Integrating (\ref{emsr3}) on $M^3_{\mathrm{ext}}$ we get
\begin{align}
Q=-\frac{1}{4\pi}\int_{\Sigma}N^{-1}e^{-2\alpha\varphi}\,{}^g\!\nabla^i \Phi\mathrm{d}\Sigma_i.
\end{align}

\noindent The same integration is done for (\ref{emsr1}) and (\ref{emsr4}). Using the above expression for the electric charge $Q$, this gives the following expressions for the mass $M$ and the scalar charge $q$:
\begin{align}
&M=M_0+\Phi_0Q,\label{mass}\\
&q=q_0+\alpha \Phi_0 Q\label{sccharge},
\end{align}

\noindent where $M_0$ is the mass of the photon sphere, defined by
\begin{align}
M_0=\frac{1}{4\pi}\int_{\Sigma}{}^g\!\nabla^iN\mathrm{d}\Sigma_i,
\end{align}

\noindent and $q_0$ is the scalar charge of the photon sphere, defined by
\begin{align}
q_0=\frac{1}{4\pi}\int_{\Sigma}N\,{}^g\!\nabla^i\varphi\mathrm{d}\Sigma_i.
\end{align}

\indent Our next step is to compute the second fundamental form $h$ of $(\Sigma,\sigma)\hookrightarrow(M^3,g)$ with unit normal $\nu$. Let $X,Y\in \Gamma(T\Sigma)$. Then
\begin{align}
h(X,Y)=g({}^g\!\nabla_X\nu,Y)=\mathfrak{g}({}^{\mathfrak{g}}\!\nabla_X\nu,Y)=\mathfrak{h}(X,Y)=\frac{\mathfrak{H}}{3}p(X,Y)=\frac{\mathfrak{H}}{3}\sigma(X,Y).
\end{align}

\noindent Thus we see that $(\Sigma,\sigma)\hookrightarrow (M^3,g)$ has CMC,
\begin{align}
H=\frac{2}{3}\mathfrak{H}.\label{sigmaCMC}
\end{align}

\indent We can now use the Codazzi equation for $(\Sigma,\sigma)\hookrightarrow (M^3,g)$. After contraction and taking into account (\ref{sigmaCMC}) we get
\begin{align}
{}^g\!R(Y,\nu)=0.
\end{align}

\noindent This can be used to prove that $\nu(N)$ is constant on $\Sigma$. To this end we calculate the Lie derivative $\mathcal{L}_X(\nu(N))$,
\begin{align}
\mathcal{L}_X(\nu(N))=&2({}^g\!\nabla^2N)(X,\nu)\\
=&2N({}^g\!R(X,\nu)-2\,{}^g\!\nabla_i\varphi\,{}^g\!\nabla_j\varphi X^i\nu^j- N^{-2}\,{}^g\!\nabla_k\Phi\,{}^g\!\nabla^k\Phi e^{-2\alpha\varphi}g_{ij}(X,\nu)\notag\\
&+2N^{-2}e^{-2\alpha\varphi}\,{}^g\!\nabla_i\Phi X^i\,{}^g\!\nabla_j\Phi \nu^j)\notag\\
=&0.\notag
\end{align}

\indent For the function $N:M^3\longrightarrow \mathbb{R}$ and the embedding $(\Sigma,\sigma)\hookrightarrow (M^3,g)$ we have
\begin{align}
{}^g\!\Delta N={}^{\sigma}\!\Delta N+{}^g\!\nabla^2N(\nu,\nu)+ {}^{\sigma}\!\mathrm{tr}(h)\nu(N).
\end{align}

\noindent This can be used in combination with the contracted Gauss equation for $(\Sigma,\sigma)\hookrightarrow (M^3,g)$,
\begin{align}
{}^g\!R-2\,{}^g\!R(\nu,\nu)={}^\sigma\!R-\frac{H^2}{2},
\end{align}

\noindent and the field equations (\ref{emsr1}-\ref{emsr4}) to yield
\begin{align}
N\,{}^{\sigma}\!R=-2N({}^{\mathfrak{g}}\!\nabla_{\nu}\varphi)^2+2H\nu(N)+ \frac{1}{2}NH^2+ 2N^{-1}e^{-2\alpha\varphi}E_{\nu}^2.\label{preint1}
\end{align}

\noindent Now we integrate (\ref{preint1}) on $\Sigma$ and use the Gauss-Bonnet theorem, which leads to
\begin{align}
N_0=\frac{1}{4\pi}N_0^{-1}e^{-2\alpha\varphi_0}E_{\nu}^2A_{\Sigma}+\frac{1}{16\pi}N_0H^2A_{\Sigma}+\frac{1}{4\pi}H[\nu(N)]_0A_{\Sigma}-\frac{1}{4\pi}N_0 ({}^{\mathfrak{g}}\!\nabla_{\nu}\varphi)^2A_{\Sigma}.\label{vajno1}
\end{align}

\indent To derive the next formula we start by applying the contracted Gauss equation to $(\Sigma,\sigma)\hookrightarrow(P^3,p)$ with unit normal $\eta$. This gives
\begin{align}
{}^p\!R+2\,{}^p\!R(\eta,\eta)={}^{\sigma}\!R.
\end{align}

\noindent For the metric (\ref{statg}) ${}^p\!R(\eta,\eta)=0$ and remembering (\ref{CSCP}) we arrive at
\begin{align}
{}^{\sigma}\!R=\frac{3}{2}H^2-2({}^{\mathfrak{g}}\!\nabla_{\nu}\varphi)^2+e^{-2\alpha\varphi}\frac{2}{N^2}E_{\nu}^2.
\end{align}

\noindent Once again we integrate on $\Sigma$,
\begin{align}
8\pi=\frac{3}{2}H^2A_{\Sigma}-2({}^{\mathfrak{g}}\!\nabla_{\nu}\varphi)^2A_{\Sigma}+\frac{2}{N_0^2}e^{-2\alpha\varphi_0}E_{\nu}^2A_{\Sigma}.\label{vajno2}
\end{align}

\noindent From (\ref{vajno1}) and (\ref{vajno2}) we get
\begin{align}\label{vajno3}
2[\nu(N)]_0=\frac{2}{\rho_{0}}=N_0 H
\end{align}

\noindent and
\begin{align}
N_0=\frac{1}{4\pi}e^{-2\alpha\varphi_0}N_0^{-1}E_{\nu}^2A_{\Sigma}-\frac{1}{4\pi}N_0({}^{\mathfrak{g}}\!\nabla_{\nu}\varphi)^2A_{\Sigma} +\frac{3}{8\pi}H [\nu(N)]_0 A_{\Sigma}.
\end{align}

\noindent From (\ref{vajno2}) and (\ref{vajno3}) we can derive one more useful relation

\begin{align}\label{vajno4}
1= \frac{1}{4\pi} H^2 A_{\Sigma} - \frac{1}{4\pi}\left[N^{-2}_{0} [\nu(N)]^2_0 + ({}^{\mathfrak{g}}\!\nabla_{\nu}\varphi)^2  - e^{-2\alpha\varphi_0}N_0^{-2}E_{\nu}^2 \right]  A_{\Sigma}.
\end{align}

\section{Symmetries of the dimensionally reduced EMD equations, divergence identities and functional dependence between the potentials  }

\indent In order to make the symmetries of the dimensionally reduced equations more transparent we  rewrite equations (\ref{emsr1}-\ref{emsr4}) using a new 3-metric $\gamma_{ij}$ on $M^3_{\mathrm{ext}}$,
\begin{align}\label{ConfTM}
\gamma_{ij}=N^2g_{ij},
\end{align}

\noindent and a new function $u$ such that $N^2=e^{2u}$. We get the following equations:
\begin{align}
&{}^{\gamma}\!R_{ij}=2D_iuD_ju+2D_i\varphi D_j\varphi-2e^{-2u-2\alpha\varphi}D_i\Phi D_j\Phi,\\
&D_iD^iu=e^{-2u-2\alpha\varphi}D_i\Phi D^i\Phi,\\
&D_iD^i\varphi=\alpha e^{-2u-2\alpha\varphi}D_i\Phi D^i\Phi,\\
&D_i(e^{-2u-2\alpha\varphi}D^i\Phi)=0,
\end{align}

\noindent where $D$ denotes the covariant derivative in the metric $\gamma$. An even more convenient form of the equations can be obtained if we use the following potentials:
\begin{align}
U=u+\alpha \varphi,\ \Psi=\varphi -\alpha u,\ \hat{\Phi}=\sqrt{1+\alpha^2}\Phi.
\end{align}

\noindent Then the field equations become
\begin{align}
&{}^{\gamma}\!R_{ij}=\frac{1}{1+\alpha^2}(2D_iUD_jU-2e^{-2U}D_i\hat{\Phi}D_j\hat{\Phi}+2D_i\Psi D_j\Psi),\label{emsrr1}\\
&D_iD^iU=e^{-2U}D_i\hat{\Phi}D^i\hat{\Phi},\label{emsrr2}\\
&D_iD^i\Psi=0,\label{emsrr3}\\
&D_i(e^{-2U}D^i\hat{\Phi})=0.\label{emsrr4}
\end{align}

\indent The above equations can be regarded as a 3-dimensional gravity coupled to a non-linear $\sigma$-model parameterized by the scalar fields $\phi^A=(U,\Psi,\hat{\Phi})$ with a target space  metric
\begin{align}\label{TSMETRIC}
G_{AB}d\phi^A d\phi^B=\frac{1}{1+\alpha^2}(\mathrm{d}U^2+\mathrm{d}\Psi^2- e^{-2U}\mathrm{d}\hat{\Phi}^2).
\end{align}

\noindent The Killing vectors for this metric are
\begin{align}
&K^{(1)}=-2\hat{\Phi}\frac{\partial}{\partial U}-e^{2U}(e^{-2U}\hat{\Phi}^2+1)\frac{\partial}{\partial \hat{\Phi}},\label{kill1}\\
&K^{(2)}=-\frac{\partial}{\partial U}-\hat{\Phi}\frac{\partial}{\partial \hat{\Phi}},\label{kill2}\\
&K^{(3)}=-\frac{\partial}{\partial \hat{\Phi}},\label{kill3}\\
&K^{(4)}=\frac{\partial}{\partial \Psi}.\label{kill4}
\end{align}
The corresponding Killing one-forms are given by
\begin{align}
&K^{(1)}_{A}d\phi^A =- 2\hat{\Phi}dU + (1 + e^{-2U}\hat{\Phi}^2)d\hat{\Phi},\label{killf1} \\
&K^{(2)}_{A}d\phi^A= - dU +  e^{-2U}\hat{\Phi}d\hat{\Phi},\label{killf2} \\
&K^{(3)}_{A}d\phi^A= e^{-2U}d\hat{\Phi}, \label{killf3}\\
&K^{(4)}_{A}d\phi^A=d\Psi. \label{killf4}
\end{align}

\indent Using the fact that $K^{(a)}_A d\phi^A $ are Killing one-forms for the metric $G_{AB}$ and taking into account the  equations for $\phi^A$ one can show that the following divergence identities are satisfied:
\begin{align}
D_i(K^{(a)}_AD^i\phi^A)=0.
\end{align}

\noindent  Integrating these equations on $M^3_{\mathrm{ext}}$ is straightforward for each of the Killing one-forms (\ref{killf1} - \ref{killf4}) and taking into account the asymptotic behaviour of the potentials $\phi^A$  we obtain one new functional relation between the potentials on $\Sigma$,
\begin{align}
e^{2U_0}=1+\hat{\Phi}_0^2-\frac{2(M+\alpha q)}{Q_{\alpha}}\hat{\Phi}_0,\label{UPhirel}
\end{align}

\noindent where $Q_{\alpha}=\sqrt{1+\alpha^2}Q$, in addition to the already known relations (\ref{mass},\ref{sccharge}).\\
\indent We will next prove that relation (\ref{UPhirel}) holds on the whole $M^3_{\mathrm{ext}}$. To do this consider the equality
\begin{align}
e^{2U}\omega_i\omega^i=D_i\left[\left( e^{2U}-1+\hat{\Phi}\left( -\hat{\Phi}+\frac{2M}{Q_{\alpha}} +\frac{2\alpha q}{Q_{\alpha}} \right) \right)\omega^i\right],
\end{align}

\noindent which follows from the field equations and where

\begin{eqnarray}
\omega_i=2D_iU+2e^{-2U}\left( -\hat{\Phi}+\frac{M}{Q_{\alpha}}+\frac{\alpha q}{Q_{\alpha}} \right)D_i\hat{\Phi}.
\end{eqnarray}

Integrating the above we get
\begin{align}
\int_{M^3_{\mathrm{ext}}}e^{2U}\omega_i\omega^i\mathrm{d}\mu=&\int_{M^3_{\mathrm{ext}}}D_i\left[\left( e^{2U}-1+\hat{\Phi}\left( -\hat{\Phi}+\frac{2M}{Q_{\alpha}} +\frac{2\alpha q}{Q_{\alpha}} \right) \right)\omega^i\right]\mathrm{d}\mu\\
=&\int_{S^2_{\infty}}\left( e^{2U}-1+\hat{\Phi}\left( -\hat{\Phi}+\frac{2M}{Q_{\alpha}} +\frac{2\alpha q}{Q_{\alpha}} \right) \right)\omega^i\mathrm{d}\Sigma_i\notag\\
&-\int_{\Sigma}\left( e^{2U}-1+\hat{\Phi}\left( -\hat{\Phi}+\frac{2M}{Q_{\alpha}} +\frac{2\alpha q}{Q_{\alpha}} \right) \right)\omega^i\mathrm{d}\Sigma_i\notag\\
=&0,\notag
\end{align}

\noindent where we have used the asymptotic behaviour of the potentials and equation (\ref{UPhirel}). It follows that $\omega_i=0$ on $M^3_{\mathrm{ext}}$ and thus
\begin{align}
e^{2U}-1-\hat{\Phi}^2+\frac{2( M+\alpha q)}{Q_{\alpha}}\hat{\Phi}=0.\label{UPhirelglob}
\end{align}

\indent The next step is to obtain yet another relation between the potentials. To do this we introduce a new potential $\zeta$, such that
\begin{align}\label{DefZeta}
\mathrm{d}\zeta=- e^{-2U}\mathrm{d}\hat{\Phi}, \ \ \zeta_{\infty}=0 .
\end{align}

\noindent Since $U$ and $\hat{\Phi}$ are not independent, $\zeta$ can be used to simplify the field equations even more. We can use $\frac{\omega_i}{2}=0$ and (\ref{UPhirelglob}) to show that
\begin{align}
D_iUD_jU=\left[ e^{-2U}-e^{-4U}+e^{-4U}\left( \frac{M}{Q_{\alpha}}+\frac{\alpha q}{Q_{\alpha}} \right)^2 \right]D_i\hat{\Phi}D_j\hat{\Phi}.
\end{align}

\noindent Then, with the new potential $\zeta$, equations (\ref{emsrr1}-\ref{emsrr4}) take the following form:
\begin{align}
&{}^{\gamma}\!R_{ij}=\frac{2}{1+\alpha^2}\left[ D_i\Psi D_j\Psi +\left[\left( \frac{M}{Q_{\alpha}} +\frac{\alpha q}{Q_{\alpha}} \right)^2-1\right]D_i\zeta D_j\zeta \right],\\
&D_iD^i\Psi=0,\\
&D_iD^i\zeta=0.
\end{align}

\indent Now let $J_i=\zeta D_i \Psi-\Psi D_i \zeta$. From the field equations follows that $D_i J^i=0$. Integrating this over $M^3_{\mathrm{ext}}$  with the asymptotic behavior of
$\zeta$ and $\Psi$ in mind,  we get
\begin{align}
\zeta_0(q_{0} -\alpha M_0)-\Psi_0Q_{\alpha}=0
\end{align}

\noindent on $\Sigma$. This can be extended to the whole of $M^3_{\mathrm{ext}}$ just like (\ref{UPhirel}). In this case we use $\omega_i=(\alpha M_0-q_0)D_i\zeta-Q_{\alpha}D_i\Psi$, which satisfies
\begin{align}
\omega_i\omega^i=D_i\left[ \left[(\alpha M_0-q_0)\zeta-Q_{\alpha}\Psi\right] \omega^i\right].
\end{align}

\noindent After integration we find that $\omega_i=0$ which gives the desired functional dependence, namely
\begin{align}
(q_0 - \alpha M_0)\zeta-Q_{\alpha}\Psi=0.\label{zetaPsirelglob}
\end{align}

\indent Using (\ref{zetaPsirelglob}) we can rewrite the field equations in the form
\begin{align}\label{DRFE1}
&{}^{\gamma}\!R_{ij}=\frac{2}{1+\alpha^2}\left( \frac{M^2+q^2}{Q^2}-1 \right)D_i\zeta D_j\zeta,\\
&D_iD^i\zeta=0.
\end{align}

It is not difficult to see that  $\phi^A(\zeta)=(U(\zeta),\Psi(\zeta), \hat{\Phi}(\zeta))$ is a geodesic of the metric (\ref{TSMETRIC}) with

\begin{eqnarray}\label{geodesic}
G_{AB}\frac{d\phi^A}{d\zeta} \frac{d\phi^B}{d\zeta}= \frac{1}{1+ \alpha^2} \left( \frac{M^2+q^2}{Q^2}-1\right).
\end{eqnarray}

Depending on the ratio $\frac{M^2 + q^2}{Q^2}$ we have three types of geodesics, which will be called  "timelike" for  $M^2 + q^2>Q^2$,
"null" for $M^2 + q^2=Q^2$ and "spacelike" for $M^2 + q^2< Q^2$.

\section{Classification of EMD spacetimes with a photon sphere}

In our previous paper \cite{Yazadjiev2015a} we defined the notion of a non-extremal photon sphere. A photon sphere is non-extremal if $\frac{1}{4\pi} H^2 A_{\Sigma}\ne 1$.
In the case of EMD gravity this condition, as it can easily be shown, is equivalent to $M^2 + q^2 - Q^2 \ne 0$. Here we shall consider only non-extremal photon spheres.
The first main result of the present paper is the following theorem:

\begin{theorem}
Let $(\mathfrak{L}_{ext}^4,\mathfrak{g}, F,\varphi)$ be a static  and asymptotically flat spacetime with  given mass $M$, electric charge $Q$ and dilaton charge $q$, satisfying the Einstein-Maxwell-dilaton equations and possessing a non-extremal photon sphere as an inner boundary of $\mathfrak{L}_{ext}^4$. Assume that the lapse function regularly foliates $\mathfrak{L}_{ext}^4$. Then $(\mathfrak{L}_{ext}^4,\mathfrak{g}, F, \varphi)$ is spherically symmetric.
\end{theorem}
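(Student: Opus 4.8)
The strategy follows the Einstein-scalar \cite{Yazadjiev2015} and Einstein-Maxwell \cite{Yazadjiev2015a} photon-sphere classifications, now applied to the $\sigma$-model reduction of Section 4. The starting point is that all matter potentials, and with them the target-space coordinates $\phi^A=(U,\Psi,\hat\Phi)$, depend functionally on the single harmonic ``master'' potential $\zeta$; since $\zeta=\zeta(N)$ by (\ref{DefZeta}) and (\ref{UPhirelglob}), its level sets coincide with those of the lapse, which by hypothesis foliate $\mathfrak{L}^4_{\mathrm{ext}}$ regularly and are the topological spheres $\Sigma_N$. On the conformal metric $\gamma=N^2g$ the field equations collapse to the pair (\ref{DRFE1}): $\zeta$ is $\gamma$-harmonic and the $\gamma$-Ricci tensor is the rank-one tensor $\frac{2}{1+\alpha^2}\!\left(\frac{M^2+q^2}{Q^2}-1\right)D_i\zeta D_j\zeta$. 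The non-extremality hypothesis $M^2+q^2\neq Q^2$ keeps the prefactor nonzero, and, being a constant, it can be absorbed into a rescaling of $\zeta$, so $(M^3_{\mathrm{ext}},\gamma,\zeta)$ is a three-dimensional ``Einstein-harmonic-scalar'' system of exactly the form treated in \cite{Yazadjiev2015}, the only point needing separate attention being the branch $M^2+q^2<Q^2$, where the rescaling flips the sign of the source but not the structure of the argument.

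First I would record the consequences of the rank-one form of ${}^{\gamma}\!R_{ij}$: writing $\nu$ for the $\gamma$-unit normal of the foliation by the $\Sigma_\zeta$, one has ${}^{\gamma}\!R(\nu,Y)=0$ and ${}^{\gamma}\!R(Y,Z)=0$ for all $Y,Z$ tangent to $\Sigma_\zeta$. On the photon sphere $\Sigma$ itself all the boundary data is already available from Section 3 (the value $N_0$ is constant, $\Sigma$ is CMC and CSC in $(M^3,g)$, and the relations (\ref{vajno3}) and (\ref{vajno4}) hold), and these carry over to $(\Sigma,\gamma)$. Then, repeating the constancy-of-$\nu(N)$ computation of Section 3 in the metric $\gamma$ (and using that $\mathrm{d}\varphi$ and $\mathrm{d}\Phi$, being multiples of $\mathrm{d}\zeta$, are $\gamma$-normal to $\Sigma_\zeta$), together with the contracted Codazzi equation and $D_iD^i\zeta=0$, I would show that $|D\zeta|$ is constant on each level set, and write $\gamma$ in Gaussian form $\gamma=|D\zeta|^{-2}\mathrm{d}\zeta^2+\sigma_\zeta$ with $\sigma_\zeta$ the induced metric on $\Sigma_\zeta$.

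With this in place I would deploy a Robinson-type divergence identity for the pair $(\zeta,|D\zeta|)$ on $(M^3_{\mathrm{ext}},\gamma)$, of the schematic form $D_iW^i=\mathcal{Q}$, where $W^i$ is an explicit flux vector built from $\zeta$ and $|D\zeta|$ and $\mathcal{Q}$ is a non-negative combination of the squared trace-free $\gamma$-Hessian of $\zeta$ and the squared trace-free part of the second fundamental form of the level sets; this is structurally the identity used in the Schwarzschild and Reissner--Nordstr\"om photon-sphere uniqueness proofs \cite{Yazadjiev2015a}. Integrating over $M^3_{\mathrm{ext}}$ and applying the divergence theorem, the boundary term at spatial infinity is controlled by asymptotic flatness (the expansions of $N$, $\Phi$, $\varphi$ and of the potentials), and the boundary term at $\Sigma$ is made to vanish using the photon-sphere relations (\ref{mass}), (\ref{sccharge}), (\ref{UPhirel}), (\ref{vajno3}) and (\ref{vajno4}) together with non-extremality. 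Hence $\mathcal{Q}\equiv 0$: the trace-free Hessian of $\zeta$ vanishes, so the level sets are totally umbilic, and combined with positivity of the scalar curvature of $(\Sigma_\zeta,\sigma_\zeta)$ (forced by the spherical topology and Gauss--Bonnet) this makes the $\Sigma_\zeta$ round spheres and $\gamma=|D\zeta|^{-2}\mathrm{d}\zeta^2+R(\zeta)^2\mathrm{d}\Omega^2$, i.e. $\gamma$ is spherically symmetric. Undoing the conformal factor, $g=N^{-2}\gamma$ is spherically symmetric, hence so is $\mathfrak{g}=-N^2\mathrm{d}t^2+g$; integrating the resulting spherically symmetric ODE system and matching $M$, $Q$, $q$ then identifies the solution with the appropriate piece of the dilatonic Reissner--Nordstr\"om family, which proves the theorem.

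I expect the main difficulty to lie in the inner boundary analysis. A photon sphere, unlike an event horizon, does not force $N\to 0$ on its cross-section, so the $\Sigma$-term in the divergence identity is not obviously zero; it has to be shown to cancel using precisely the derived scalar identities (\ref{vajno3})--(\ref{vajno4}) and the non-extremality hypothesis, and choosing the exact linear combination in the flux vector $W^i$ so that $\mathcal{Q}$ is a perfect square while the $\Sigma$-contribution collapses is the delicate step; extremal photon spheres, for which $\frac{1}{4\pi}H^2A_{\Sigma}=1$, are genuinely excluded. A secondary subtlety, flagged above, is the sign of the effective source in (\ref{DRFE1}) when $M^2+q^2<Q^2$: this only rescales $\zeta$ and does not affect the positivity of $\mathcal{Q}$, but one should check that the $N$-foliation, and hence the use of $\zeta$, remains regular in that regime, which is again where non-extremality enters.
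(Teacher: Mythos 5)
Your overall architecture --- collapse the problem to the three--dimensional system (\ref{DRFE1}) for the single harmonic potential $\zeta$ via the functional dependences of Section 4, then run an Israel--Robinson type divergence argument whose inner boundary terms are controlled by the Section 3 photon-sphere relations, and finally read off umbilicity of the level sets --- is exactly the paper's strategy. The problem is that the step you leave schematic (``I would deploy a Robinson-type divergence identity $D_iW^i=\mathcal{Q}$ \dots'') is the entire content of the proof, and the mechanism you describe for closing it is not the one that works. The paper does not produce a single identity whose $\Sigma$-boundary term ``is made to vanish'' from (\ref{vajno3})--(\ref{vajno4}); it uses \emph{two} non-negative integral inequalities (\ref{INQ1}) and (\ref{INQ2}), built from the explicit choices $\Gamma=-\tanh\lambda$, $\Omega=\cosh^{-2}\lambda$, $\chi=(\gamma^{ij}D_i\Gamma D_j\Gamma)^{1/4}$ (trigonometric analogues when $M^2+q^2<Q^2$). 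Their boundary evaluations give the two \emph{opposite} inequalities $\left(\frac{d\ln N}{d\lambda}\right)_0\ge-\tfrac12\coth(\lambda_0)$ and $\left(\frac{d\ln N}{d\lambda}\right)_0\le-\tfrac12\coth(\lambda_0)$, the second only after inserting the identity $N_0^2\rho_0\left(\frac{d\ln N}{d\lambda}\right)_0=\sqrt{M^2+q^2-Q^2}\,\bigl[4\left(\frac{d\ln N}{d\lambda}\right)_0^2-1\bigr]$ derived from (\ref{vajno3}), (\ref{vajno4}) and (\ref{geodesic}). Equality is then forced, and only at that point does the Bach tensor vanish. In particular the photon-sphere equation (\ref{PSEq}) is an \emph{output} of this sandwich, not an input you may use to cancel a boundary term; without exhibiting the specific $\Gamma$, $\Omega$, $\chi$ and deriving both inequalities, the delicate step you yourself flag has not been carried out.

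There is also a concretely wrong intermediate claim: you propose to show, \emph{before} the rigidity argument, that $|D\zeta|$ is constant on every level set $\Sigma_\zeta$ by ``repeating the constancy-of-$\nu(N)$ computation of Section 3.'' That computation relies on $\Sigma$ being totally umbilic with constant mean curvature and on $\mathrm{d}\varphi$, $\mathrm{d}\Phi$ being normal to it --- properties guaranteed on the photon sphere by definition and by the Claudel--Virbhadra--Ellis theorem, but not known on interior level sets at that stage. In the paper, the constancy of $\rho$ on the level sets ($\partial_A\rho=0$) and their umbilicity are the \emph{conclusion}, extracted at the very end from $R(g)_{ijk}=0$ through the explicit sum-of-squares formula for $R(g)_{ijk}R(g)^{ijk}$; Gauss--Bonnet alone does not upgrade umbilic topological spheres with positive \emph{average} curvature to round ones. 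Finally, note that the theorem is proved case by case ($M^2+q^2>Q^2$ versus $M^2+q^2<Q^2$) with genuinely different hyperbolic/trigonometric substitutions and a separate verification of the range of $\lambda$, so the sign flip you mention is not absorbed by a mere rescaling of $\zeta$.
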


\begin{proof}In proving the theorem we shall follow \cite{Yazadjiev2015} and \cite{Yazadjiev2015a} with some technical modifications.

\vskip 0.3cm

{\bf Case $M^2 + q^2>Q^2 $}

\vskip 0.2cm

\noindent  We  first consider the case corresponding to "timelike" geodesics of the target space metric  when  $M^2+q^2-Q^2>0$ and introduce a new potential $\lambda$, such that
\begin{align}
\lambda=\sqrt{\left( \frac{M^2+q^2}{Q^2}-1 \right)\frac{1}{1+\alpha^2}} \zeta .
\end{align}
From the definition of $\lambda$  and eq. (\ref{DefZeta}) it is easy one to show that $\lambda$ has the following asymptotic behaviour:

\begin{eqnarray}\label{ASLAMBDA}
\lambda = - \frac{\sqrt{M^2 + q^2 -Q^2}}{r} + O(r^{-2}).
\end{eqnarray}

The fact that $\lambda$ is harmonic with the above asymptotic behaviour shows that $\lambda<0$ on  $M^3_{\mathrm{ext}}$.

\noindent Now we will use the inequalities \cite{Yazadjiev2015,Yazadjiev2015a}
\begin{align}\label{INQ1}
\int_{M^3_{\mathrm{ext}}}D^i\left[ \Omega^{-1}(\Gamma D_i\chi - \chi D_i\Gamma) \right]\sqrt{\gamma}\mathrm{d}^3x \ge 0
\end{align}

\noindent and
\begin{align}\label{INQ2}
\int_{M^3_{\mathrm{ext}}}D^i\left( \Omega^{-1}D_i\chi \right)\sqrt{\gamma}\mathrm{d}^3x\ge \int_{M^3_{\mathrm{ext}}}D^i\left[ \Omega^{-1}(\Gamma D_i\chi - \chi D_i\Gamma) \right]\sqrt{\gamma}\mathrm{d}^3x,
\end{align}

\noindent where
\begin{align}
\chi=\left( \gamma^{ij}D_i\Gamma D_j\Gamma \right)^{\frac{1}{4}}, \ \ \Gamma= - \tanh(\lambda), \ \ \Omega=\frac{1}{\cosh^2(\lambda)}.
\end{align}

What is important here is the fact that the equalities in (\ref{INQ1}) and (\ref{INQ2}) hold if and only if the Bach tensor $R(\gamma)_{ijk}$ vanishes.

After long algebra and with the help of the Gauss theorem one can show that the first inequality (\ref{INQ1}) is equivalent to

\begin{eqnarray}
\left(\frac{d\ln(N)}{d\lambda}\right)_{0} \ge -\frac{1}{2} \coth(\lambda_{0}) ,
\end{eqnarray}
while the second inequality (\ref{INQ2}) gives

\begin{eqnarray}\label{INQ2_1}
e^{2\lambda_{0}}\ge \left(M^2 + q^2 - Q^2 \right)^{-1/2} \frac{N^2_{0} \rho_{0} \left(\frac{d\ln(N)}{d\lambda}\right)_{0} }{\left(2\left(\frac{d\ln(N)}{d\lambda}\right)_{0} +1\right)^2}.
\end{eqnarray}

Here the subscript "0" means that the corresponding quantity is evaluated on the photon sphere. In order to further simplify (\ref{INQ2_1}) we
make use of the following equality:

\begin{eqnarray}
N^2_{0}\rho_{0} \left(\frac{d\ln(N)}{d\lambda}\right)_{0}= \left( M^2 + q^2 - Q^2 \right)^{1/2} \left[4 \left(\frac{d\ln(N)}{d\lambda}\right)_{0}^2 - 1\right],
\end{eqnarray}
which can be derived by using the results in section 3 and (\ref{geodesic}). With this equality taken into account, the inequality  (\ref{INQ2}) becomes

\begin{eqnarray}
\left(\frac{d\ln(N)}{d\lambda}\right)_{0} \le -\frac{1}{2} \coth(\lambda_{0}).
\end{eqnarray}

Therefore we have

\begin{eqnarray}\label{PSEq}
\left(\frac{d\ln(N)}{d\lambda}\right)_{0} = -\frac{1}{2} \coth(\lambda_{0})
\end{eqnarray}
and we conclude that the Bach tensor vanishes, $R(\gamma)_{ijk}=0$. This means that the metric $\gamma_{ij}$ is conformally flat. As a direct consequence from
(\ref{ConfTM}) it follows that the metric $g_{ij}$ is also conformally flat or equivalently

\begin{eqnarray}
R(g)_{ijk}=0.
\end{eqnarray}

Since the lapse function $N$ regularly foliates $M^3_{\mathrm{ext}}$, we can write the metric $g_{ij}$ in the form

\begin{eqnarray}
g= \rho^2 dN^2 + \sigma_{AB}dx^A dx^B,
\end{eqnarray}
where $\sigma_{AB}$ is the 2-dimensional metric on the 2-dimensional intersections  $\Sigma_{N}$ of the level
sets $N = const$ with $M^3_{\mathrm{ext}}$. Let us denote the second fundamental form of $(\Sigma_N,\sigma)\hookrightarrow(M^3,g)$ by
$h^{\Sigma_{N}}_{AB}$ and its trace by $H^{\Sigma_{N}}$. After long calculations and with the help of the dimensionally reduced field equations
we find

\begin{eqnarray}
R(g)_{ijk}R(g)^{ijk}= \frac{8(1+ \alpha^2)^2}{N^4\rho^4}\frac{ \left(\frac{M^2+ q^2}{Q^2}-1\right)^2} {\left[\sqrt{e^{2U} + \left(\frac{M + \alpha q}{Q_{\alpha}}\right)^2 -1} +\alpha
\left(\frac{\alpha M - q}{Q_{\alpha}}\right) \right]^4}  \times \\\nonumber  \left[\left(h^{\Sigma_{N}}_{AB}- \frac{1}{2}H^{\Sigma_{N}} \sigma_{AB}\right)
\left(h^{\Sigma_{N}\, AB}- \frac{1}{2}H^{\Sigma_{N}} \sigma^{AB} \right)    + \frac{1}{2\rho^2}\sigma^{AB}\partial_{A}\rho\partial_{B}\rho \right].
\end{eqnarray}
Since $R(g)_{ijk}=0$ and $M^2 +q^2> Q^2$ we can conclude that

\begin{eqnarray}
h^{\Sigma_{N}}_{AB}= \frac{1}{2}H^{\Sigma_{N}} \sigma_{AB},\;\;\; \partial_{A}\rho=0.
\end{eqnarray}

Therefore the metric $g_{ij}$ is spherically symmetric. The same applies to the metric $\gamma_{ij}$.
Let us also note that Eq. (\ref{PSEq}) is just the equation for the photon sphere for $M^2 + q^2>Q^2 $ and it arises naturally in our approach.

\vskip 0.3cm

{\bf Case $M^2 + q^2<Q^2 $}

\vskip 0.2cm

\noindent In the case under consideration we will use the potential $ \lambda=\sqrt{\left(1-  \frac{ M^2+q^2}{Q^2} \right)\frac{1}{1+\alpha^2}} \zeta$ with asymptotic

\begin{eqnarray}\label{ASLAMBDA1}
\lambda = - \frac{\sqrt{Q^2 - M^2 - q^2}}{r} + O(r^{-2}).
\end{eqnarray}

One can also show that $-\frac{\pi}{2}<\lambda<0$ on  $M^3_{\mathrm{ext}}$. As in the previous case we consider the inequalities (\ref{INQ1}) and (\ref{INQ2}), but this time with
different functions $\Gamma$ and $\Omega$, namely

\begin{eqnarray}
\Gamma= - \tan(\lambda), \; \; \; \Omega= \cos^{-2}(\lambda).
\end{eqnarray}
Following the same steps as in the previous case one can show that the first inequality  reduces to

\begin{eqnarray}
\left(\frac{d\ln(N)}{d\lambda}\right)_{0}\ge -\frac{1}{2} \cot(\lambda_{0})
\end{eqnarray}
while the second inequality (\ref{INQ2}) gives

\begin{eqnarray}
\left(\frac{d\ln(N)}{d\lambda}\right)_{0}\le -\frac{1}{2} \cot(\lambda_{0}).
\end{eqnarray}
Hence we conclude that
\begin{eqnarray}\label{PSEq1}
\left(\frac{d\ln(N)}{d\lambda}\right)_{0}= -\frac{1}{2} \cot(\lambda_{0})
\end{eqnarray}
and therefore $R(\gamma)_{ijk}=0$ which means that
 $\gamma_{ij}$ and $g_{ij}$ are conformally flat. The same argument as in the previous case shows that $g_{ij}$ is spherically symmetric.
Let us also note that (\ref{PSEq1}) is the equation for the photon sphere in the case under consideration.
\end{proof}

The second main result of this paper is the explicit classification of the static and asymptotically flat Einstein-Maxwell-dilaton spacetimes possessing a photon sphere.
In order to simplify the equations  we will use a new parameter $M_{\alpha}$, defined by

\begin{eqnarray}
M_{\alpha}=M+ \alpha q.
\end{eqnarray}
It is also useful to give the following formula
\begin{eqnarray}
M^2 + q^2 - Q^2 = \frac{1}{1+ \alpha^2}\left[M^2_{\alpha}- Q^2_{\alpha} + \left(q-\alpha M\right)^2 \right].
\end{eqnarray}
The derivation of the solutions, possessing a photon sphere, is as follows.

\medskip
\noindent

{\bf Case $M^2 + q^2>Q^2 $}

\medskip
\noindent

In this case the dimensionally reduced field equations (\ref{DRFE1}) become

\begin{eqnarray}
&&{}^{\gamma}\!R_{ij}=2D_i\lambda D_j\lambda, \nonumber \\
&&D_iD^i\lambda=0.
\end{eqnarray}
These equations are in fact the static vacuum Einstein equations written in terms of the metric $\gamma_{ij}$ with an effective lapse
function $N_{eff}=e^{\lambda}$ having an effective mass $M_{eff}=\sqrt{M^2 + q^2-Q^2}$ as follows from (\ref{ASLAMBDA}). Since
the Schwarzschild solution is the only static and spherically symmetric solution to the vacuum Einstein equations, we find

\begin{eqnarray}
&&e^{2\lambda}= 1- \frac{2\sqrt{M^2 + q^2 -Q^2}}{r},\\
&&\gamma_{ij}dx^idx^j= dr^2 +e^{2\lambda} r^2(d\theta^2 + \sin^2\theta d\phi^2).
\end{eqnarray}

The spacetime metric is therefore

\begin{eqnarray}
ds^2 = -N^2 dt^2 + N^{-2}\left[dr^2 +e^{2\lambda} r^2(d\theta^2 + \sin^2\theta d\phi^2)\right].
\end{eqnarray}

In order to obtain the  lapse function $N$, the electrostatic potential $\Phi$ and the dilaton field $\varphi$ we have to integrate eq.(\ref{DefZeta})  and take into account
eq.(\ref{zetaPsirelglob}). Depending on $M_{\alpha}$ and $Q_{\alpha}$ we have three classes of solutions.

\medskip
\noindent

1) The first class of  solutions is obtained for $M^2_{\alpha}>Q^2_{\alpha}$ and  the lapse function $N$, the electrostatic potential $\Phi$ and the dilaton field $\varphi$ are given by

\begin{eqnarray}
&&N^2= \left[4\left(1-\frac{Q^2_{\alpha}}{M^2_{\alpha}} \right)\right]^{\frac{1}{1+ \alpha^2}} \frac{ e^{\frac{2}{1+ \alpha^2} \frac{\sqrt{ M^2_{\alpha}- Q^2_{\alpha}}- \alpha\left(q-\alpha M\right) }{\sqrt{M^2 + q^2 -Q^2}} \lambda } }{\left[1 +  \sqrt{1 -  \frac{Q^2_{\alpha}}{M^2_{\alpha}}} - \left( 1 - \sqrt{1-  \frac{Q^2_{\alpha}}{M^2_{\alpha}} }\right) e^{2\sqrt{\frac{M^2_{\alpha}- Q^2_{\alpha}}{M^2 + q^2 -Q^2} }\lambda }\right]^{\frac{2}{1+ \alpha^2}} },
\nonumber \\
&&\Phi= \frac{Q}{M_{\alpha}} \frac{1 - e^{2\sqrt{\frac{M^2_{\alpha}- Q^2_{\alpha}}{M^2 + q^2 -Q^2} }\lambda } }{1 +  \sqrt{1 -  \frac{Q^2_{\alpha}}{M^2_{\alpha}}} - \left( 1 - \sqrt{1-  \frac{Q^2_{\alpha}}{M^2_{\alpha}} }\right) e^{2\sqrt{\frac{M^2_{\alpha}- Q^2_{\alpha}}{M^2 + q^2 -Q^2} }\lambda }},\\
&&e^{2\varphi}= \left[4\left(1 - \frac{Q^2_{\alpha}}{M^2_{\alpha}}\right)\right]^{\frac{\alpha}{1+ \alpha^2}} \frac{ e^{\frac{2}{1+ \alpha^2} \frac{\alpha \sqrt{ M^2_{\alpha}- Q^2_{\alpha}}+ \left(q-\alpha M\right) }{\sqrt{M^2 + q^2 -Q^2}} \lambda } }{\left[1 +  \sqrt{1 -  \frac{Q^2_{\alpha}}{M^2_{\alpha}}} - \left( 1 - \sqrt{1-  \frac{Q^2_{\alpha}}{M^2_{\alpha}} }\right) e^{2\sqrt{\frac{M^2_{\alpha}- Q^2_{\alpha}}{M^2 + q^2 -Q^2} }\lambda }\right]^{\frac{2\alpha}{1+ \alpha^2}} }.\nonumber
\end{eqnarray}
It is not difficult to show that for this class of solutions the equation for the photon sphere (\ref{PSEq}) has solutions only when the parameters  $M$, $Q$ and $q$ are subject to the inequality

\begin{eqnarray}
\sqrt{M^2_{\alpha} - Q^2_{\alpha}} + \alpha\left(\alpha M  - q\right)> \frac{1}{2}\sqrt{1+ \alpha^2} \sqrt{M^2_{\alpha} - Q^2_{\alpha}  + (q - \alpha M)^2  }.
\end{eqnarray}

An important subclass of solutions are the black hole solutions. The EMD black hole solutions correspond to $\alpha\sqrt{M^2_{\alpha} - Q^2_{\alpha}}=\alpha M - q$. It is easy to see
that the above inequality is satisfied for the black hole solutions and therefore the EMD black holes always possess a photon sphere.

2) The second class of solutions is obtained for $M^2_{\alpha}=Q^2_{\alpha}$ and we have

\begin{eqnarray}
&&N^2= \frac{e^{- \frac{2\varepsilon\alpha}{\sqrt{1+ \alpha^2}}\lambda}}{\left[1+ \varepsilon\sqrt{1+ \alpha^2}\frac{M+ \alpha q}{\alpha M - q}\lambda\right]^{\frac{2}{1+ \alpha^2}} },\nonumber \\
&&\Phi= \frac{\varepsilon \frac{M+ \alpha q}{\alpha M-q} \lambda }{1+ \varepsilon\sqrt{1+ \alpha^2}\frac{M+ \alpha q}{\alpha M - q}\lambda},\\
&&e^{2\varphi}= \frac{e^{- \frac{2\varepsilon}{\sqrt{1+ \alpha^2}}\lambda}}{\left[1+ \varepsilon\sqrt{1+ \alpha^2}\frac{M+ \alpha q}{\alpha M -q}\lambda\right]^{\frac{2\alpha}{1+ \alpha^2}} }, \nonumber
\end{eqnarray}
where $\varepsilon=(q-\alpha M)/|q-\alpha M|$. The equation for the sphere (\ref{PSEq})  has solutions only when  $M$ and $q$ are subject to the inequality

\begin{eqnarray}
\alpha(\alpha M -q)> \frac{1}{2}\sqrt{1+ \alpha^2} |q-\alpha M|.
\end{eqnarray}

3) The third class of solutions is for $M^2_{\alpha}<Q^2_{\alpha}$ and we have

\begin{eqnarray}
&&N^2= \left(1 - \frac{M^2_{\alpha}}{Q^2_{\alpha}}\right)^{\frac{1}{1+\alpha^2}} \frac{e^{\frac{2\alpha}{1+ \alpha^2} \frac{(\alpha M - q)}{\sqrt{M^2 + q^2 - Q^2}}\lambda }}
{\cos^{\frac{2}{1 + \alpha^2}}\left[\sqrt{\frac{Q^2_{\alpha}-M^2_{\alpha}}{M^2 +q^2 -Q^2}} \lambda + \arctan\left(\frac{\frac{M_{\alpha}}{Q_{\alpha}}}{\sqrt{1 - \frac{M^2_{\alpha}}{Q^2_{\alpha}}} }\right)\right]},  \\
&&\Phi= \frac{1}{\sqrt{1+ \alpha^2}}\left[\frac{M_{\alpha}}{Q_{\alpha}} - \sqrt{1 - \frac{M^2_{\alpha}}{Q^2_{\alpha}}} \tan\left(\sqrt{\frac{Q^2_{\alpha}-M^2_{\alpha}}{M^2 +q^2 -Q^2}} \lambda + \arctan\left(\frac{\frac{M_{\alpha}}{Q_{\alpha}}}{\sqrt{1 - \frac{M^2_{\alpha}}{Q^2_{\alpha}}} }\right)\right)\right], \nonumber\\
&&e^{2\varphi}= \left(1 - \frac{M^2_{\alpha}}{Q^2_{\alpha}}\right)^{\frac{\alpha}{1+\alpha^2}} \frac{e^{-\frac{2}{1+ \alpha^2} \frac{(\alpha M - q)}{\sqrt{M^2 + q^2 - Q^2}}\lambda }}
{\cos^{\frac{2\alpha}{1 + \alpha^2}}\left[\sqrt{\frac{Q^2_{\alpha}-M^2_{\alpha}}{M^2 +q^2 -Q^2}} \lambda + \arctan\left(\frac{\frac{M_{\alpha}}{Q_{\alpha}}}{\sqrt{1 - \frac{M^2_{\alpha}}{Q^2_{\alpha}}} }\right)\right]}.\nonumber
\end{eqnarray}
The equation for the photon sphere (\ref{PSEq}) possesses solutions only for  $M$, $Q$ and $q$ subject to the inequality

\begin{eqnarray}
&&\alpha (\alpha M - q) + \sqrt{Q^2_{\alpha}-M^2_{\alpha}} \tan\left[\sqrt{1+ \alpha^2}\frac{\sqrt{Q^2_{\alpha}-M^2_{\alpha}}}{\sqrt{M^2_{\alpha} - Q^2_{\alpha} + (q-\alpha M)^2 }} + \arctan\left(\frac{M_{\alpha}}{\sqrt{Q^2_{\alpha}-M^2_{\alpha}}}\right)\right] \nonumber \\
&&\ge -\frac{1}{2}\sqrt{1+\alpha^2}  \sqrt{M^2_{\alpha} - Q^2_{\alpha} + (q-\alpha M)^2} \coth(\lambda_c),
\end{eqnarray}
where the critical value  $\lambda_c$ of $\lambda$ is given by

\begin{eqnarray}
\tanh(\lambda_c)= \frac{\sqrt{M^2_{\alpha} -Q^2_{\alpha} + (q-\alpha M)^2}}{Q^2_{\alpha} - M^2_{\alpha} + (1+ 2\alpha^2)(q-\alpha M)^2}
\left\{\alpha\sqrt{1+ \alpha^2}(\alpha M-q)  \right. \nonumber\\
\left. -\frac{1}{\sqrt{2}}\sqrt{\alpha^2[M^2_{\alpha} - Q^2_{\alpha} + 3(q-\alpha M)^2] + Q^2_{\alpha} -M^2_{\alpha} + (q-\alpha M)^2}\right\}.
\end{eqnarray}

\medskip
\noindent

{\bf Case $M^2 + q^2< Q^2 $}

\medskip
\noindent

In this case the dimensionally reduced equations become

\begin{eqnarray}
&&{}^{\gamma}\!R_{ij}=-2D_i\lambda D_j\lambda, \nonumber \\
&&D_iD^i\lambda=0.
\end{eqnarray}
Solving these equations for spherically symmetric space and taking into account (\ref{ASLAMBDA1})
we find

\begin{eqnarray}
&&\lambda= \arctan\left(\frac{\sqrt{Q^2-M^2-q^2}}{r}\right),\\
&&\gamma_{ij}= dr^2 + (r^2 + Q^2 - M^2 - q^2)(d\theta^2 + \sin^2\theta d\phi^2).
\end{eqnarray}

The 4-dimensional metric is then
\begin{eqnarray}
ds^2 = - N^2 dt^2 + N^{-2} \left[dr^2 + (r^2 + Q^2 - M^2 - q^2)(d\theta^2 + \sin^2\theta d\phi^2) \right].
\end{eqnarray}

The lapse function $N$, the electrostatic potential $\Phi$ and the dilaton field $\varphi$ are given by

\begin{eqnarray}
&&N^2= \left(1 - \frac{M^2_{\alpha}}{Q^2_{\alpha}}\right)^{\frac{1}{1+\alpha^2}} \frac{e^{\frac{2\alpha}{1+ \alpha^2} \frac{(\alpha M - q)}{\sqrt{Q^2 -M^2 - q^2}}\lambda }}
{\cos^{\frac{2}{1 + \alpha^2}}\left[\sqrt{\frac{Q^2_{\alpha}-M^2_{\alpha}}{Q^2 -M^2 -q^2}} \lambda + \arctan\left(\frac{\frac{M_{\alpha}}{Q_{\alpha}}}{\sqrt{1 - \frac{M^2_{\alpha}}{Q^2_{\alpha}}} }\right)\right]},  \\
&&\Phi= \frac{1}{\sqrt{1+ \alpha^2}}\left[\frac{M_{\alpha}}{Q_{\alpha}} - \sqrt{1 - \frac{M^2_{\alpha}}{Q^2_{\alpha}}} \tan\left(\sqrt{\frac{Q^2_{\alpha}-M^2_{\alpha}}{Q^2 -M^2 -q^2}} \lambda + \arctan\left(\frac{\frac{M_{\alpha}}{Q_{\alpha}}}{\sqrt{1 - \frac{M^2_{\alpha}}{Q^2_{\alpha}}} }\right)\right)\right], \nonumber\\
&&e^{2\varphi}= \left(1 - \frac{M^2_{\alpha}}{Q^2_{\alpha}}\right)^{\frac{\alpha}{1+\alpha^2}} \frac{e^{-\frac{2}{1+ \alpha^2} \frac{(\alpha M - q)}{\sqrt{Q^2- M^2 - q^2}}\lambda }}
{\cos^{\frac{2\alpha}{1 + \alpha^2}}\left[\sqrt{\frac{Q^2_{\alpha}-M^2_{\alpha}}{Q^2 -M^2 -q^2}} \lambda + \arctan\left(\frac{\frac{M_{\alpha}}{Q_{\alpha}}}{\sqrt{1 - \frac{M^2_{\alpha}}{Q^2_{\alpha}}} }\right)\right]}.\nonumber
\end{eqnarray}

The equations for the photon sphere (\ref{PSEq}) possesses solutions only for  $M$, $Q$ and $q$ subject to the inequality

\begin{eqnarray}
&&\alpha (\alpha M - q) + \sqrt{Q^2_{\alpha}-M^2_{\alpha}} \tan\left[\sqrt{1+ \alpha^2}\frac{\sqrt{Q^2_{\alpha}-M^2_{\alpha}}}{\sqrt{Q^2_{\alpha} - M^2_{\alpha} - (q-\alpha M)^2 }} + \arctan\left(\frac{M_{\alpha}}{\sqrt{Q^2_{\alpha}-M^2_{\alpha}}}\right)\right] \nonumber \\
&&\ge -\frac{1}{2}\sqrt{1+\alpha^2}  \sqrt{Q^2_{\alpha} - M^2_{\alpha} - (q-\alpha M)^2} \cot(\lambda_c)
\end{eqnarray}

where the critical value  $\lambda_c$ of $\lambda$ is given by

\begin{eqnarray}
\tan(\lambda_c)= \frac{\sqrt{Q^2_{\alpha} -M^2_{\alpha} - (q-\alpha M)^2}}{Q^2_{\alpha} - M^2_{\alpha} + (1+ 2\alpha^2)(q-\alpha M)^2}
\left\{-\alpha\sqrt{1+ \alpha^2}(\alpha M-q)  \right. \nonumber\\
\left. -\frac{1}{\sqrt{2}}\sqrt{\alpha^2[M^2_{\alpha} - Q^2_{\alpha} + 3(q-\alpha M)^2] + Q^2_{\alpha} -M^2_{\alpha} + (q-\alpha M)^2}\right\}
\end{eqnarray}

with the additional restriction

\begin{eqnarray}
(q - \alpha M)^2 > \frac{\alpha^2-1}{1+ 3\alpha^2} (Q^2_{\alpha} - M^2_{\alpha})
\end{eqnarray}
for $\alpha^2>1$.

This way we proved the following theorem:

\begin{theorem}
Let $(\mathfrak{L}_{ext}^4,\mathfrak{g}, F,\varphi)$ be a static  and asymptotically flat spacetime with  given mass $M$, electric charge charge $Q$ and dilaton charge $q$, satisfying the Einstein-Maxwell-dilaton equations and possessing a non-extremal photon sphere as an inner boundary of $\mathfrak{L}_{ext}^4$. Assume that the lapse function regularly foliates $\mathfrak{L}_{ext}^4$. Then $(\mathfrak{L}_{ext}^4,\mathfrak{g}, F, \varphi)$  is isometric to one of the spherically symmetric solutions described above with $M$, $Q$ and $q$ subject to
the corresponding inequalities given above.
\end{theorem}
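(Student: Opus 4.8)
The plan is to deduce the theorem from the spherical symmetry theorem established above together with the explicit integration carried out in this section. First, once the hypotheses are in force, that theorem gives $R(g)_{ijk}=0$, so the spatial metric is conformally flat, the level sets $\Sigma_N$ of the regularly foliating lapse are round spheres, and $\partial_A\rho=0$; hence $g$ (and $\gamma=N^2 g$) is spherically symmetric and all field equations reduce to ordinary differential equations in the radial coordinate $r$. Because the non-extremality assumption $\tfrac{1}{4\pi}H^2A_\Sigma\neq1$ is equivalent to $M^2+q^2-Q^2\neq0$, the argument splits into the two cases $M^2+q^2>Q^2$ and $M^2+q^2<Q^2$, handled via the harmonic potential $\lambda\propto\zeta$ with asymptotics (\ref{ASLAMBDA}), resp.\ (\ref{ASLAMBDA1}).

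Second, in each case the system (\ref{DRFE1}) for the pair $(\gamma_{ij},\lambda)$ is exactly the static vacuum Einstein system --- with effective lapse $e^\lambda$ and effective mass $\sqrt{M^2+q^2-Q^2}$ when $M^2+q^2>Q^2$, and its spacelike analogue when $M^2+q^2<Q^2$. Spherical symmetry plus the uniqueness of the static spherically symmetric vacuum solution then fixes $\gamma_{ij}$ and $\lambda(r)$ uniquely, giving $e^{2\lambda}=1-2\sqrt{M^2+q^2-Q^2}/r$ with $\gamma=dr^2+e^{2\lambda}r^2(d\theta^2+\sin^2\theta\,d\phi^2)$ in the first case, and $\lambda=\arctan(\sqrt{Q^2-M^2-q^2}/r)$ with $\gamma=dr^2+(r^2+Q^2-M^2-q^2)(d\theta^2+\sin^2\theta\,d\phi^2)$ in the second. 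Restoring $g=N^{-2}\gamma$ yields the displayed four-dimensional line elements.

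Third, the matter fields $N$, $\Phi$, $\varphi$ are recovered from $\lambda$ by quadratures: since $\phi^A(\zeta)$ is a geodesic of the target-space metric (\ref{TSMETRIC}) with norm (\ref{geodesic}), one integrates the defining relation (\ref{DefZeta}), $d\zeta=-e^{-2U}d\hat\Phi$, using the global algebraic constraint (\ref{UPhirelglob}) between $U$ and $\hat\Phi$ and the linear relation (\ref{zetaPsirelglob}) between $\zeta$ and $\Psi$; the integration constants are fixed by the normalizations $\varphi_\infty=\Phi_\infty=\zeta_\infty=0$ and by matching the asymptotic coefficients so that the physical mass, electric charge and dilaton charge really equal $M$, $Q$, $q$. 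Solving the quadratic (\ref{UPhirelglob}) for $\hat\Phi(U)$ produces qualitatively different integrals according to the sign of $M_\alpha^2-Q_\alpha^2$, which is exactly why the case $M^2+q^2>Q^2$ further branches into the three families $M_\alpha^2>Q_\alpha^2$, $M_\alpha^2=Q_\alpha^2$, $M_\alpha^2<Q_\alpha^2$; undoing the substitutions $U=u+\alpha\varphi$, $\Psi=\varphi-\alpha u$, $\hat\Phi=\sqrt{1+\alpha^2}\,\Phi$, $N^2=e^{2u}$ then gives the closed forms for $N^2$, $\Phi$, $e^{2\varphi}$ stated above. Finally, evaluating the photon sphere equation (\ref{PSEq}) --- resp.\ (\ref{PSEq1}) --- on these explicit profiles at the inner boundary $N=N_0$, i.e.\ $\lambda=\lambda_0$, and using that $\lambda$ is monotone in $r$ up to the critical value $\lambda_c$, translates directly into the parameter inequalities listed for each family; conversely, any $M$, $Q$, $q$ satisfying such an inequality yields an admissible non-extremal photon sphere, which gives the asserted correspondence.

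I expect the main obstacle to lie in the third step. The reduction of the geometry to Schwarzschild (or its spacelike counterpart) is immediate once spherical symmetry is known, but integrating (\ref{DefZeta}) in each of the subcases, and especially pinning the integration constants by a careful asymptotic match so that the parameters appearing in the formulas are genuinely the physical charges, is delicate; equally delicate is extracting the admissibility inequalities from (\ref{PSEq}), where one must identify the critical value $\lambda_c$ at which $N(\lambda)$ ceases to be monotone --- this is what controls the inequality in the $M_\alpha^2<Q_\alpha^2$ subcase and the extra restriction $(q-\alpha M)^2>\tfrac{\alpha^2-1}{1+3\alpha^2}(Q_\alpha^2-M_\alpha^2)$ for $\alpha^2>1$. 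All of these are lengthy but elementary computations once the $\sigma$-model geodesic structure (\ref{geodesic}) has reduced the problem to quadratures.
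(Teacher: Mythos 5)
Your proposal is correct and follows essentially the same route as the paper: invoke the spherical-symmetry theorem, observe that the reduced system for $(\gamma_{ij},\lambda)$ is the (effective) static vacuum problem whose spherically symmetric solution is unique, recover $N$, $\Phi$, $\varphi$ by integrating (\ref{DefZeta}) subject to (\ref{UPhirelglob}) and (\ref{zetaPsirelglob}) with the sign of $M_\alpha^2-Q_\alpha^2$ governing the three subfamilies, and read off the parameter inequalities from the photon sphere equation (\ref{PSEq}), resp.\ (\ref{PSEq1}). No substantive difference from the paper's argument.
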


\section{Conclusion}

In the present paper we considered the problem for the classification of the static and asymptotically flat EMD spacetimes possessing a photon sphere.
We first proved that photon spheres in EMD gravity have constant mean curvature and constant scalar curvature and  derived some relations between
the mean curvature of the photon sphere and its physical characteristics. Using further the symmetries of the dimensionally reduced EMD equations
we showed that the lapse function, the electrostatic potential and the dilaton field are functionally dependent.  Then, assuming that the photon sphere is non-extremal we proved that the static and asymptotically flat EMD spacetimes  are spherically symmetric. Finally we derived all static and asymptotically  flat  spherically symmetric EMD solutions
with a photon sphere which are fully specified in terms of the mass $M$, the electric charge  $Q$ and the scalar charge $q$ subject to certain inequalities. Our results were derived under the natural technical assumption that  the lapse function regularly foliates the spacetime outside the photon sphere. This assumption and the connectedness of the photon sphere can be easily dropped only in the "black hole case" but not in the general case. In simple words, in the black hole case we can continue the solutions inside the photon sphere(s) to the black hole horizon and  then we can apply the static black hole uniqueness theorem approach \cite{Bunting1987} as shown in \cite{Cederbaum2015a,Cederbaum2015b}. However, this technique, at least in the form used in \cite{Cederbaum2015a,Cederbaum2015b}, can not be applied in the general case (i.e. for non-black hole spacetimes). In most cases we formally have  naked singularities and there is no horizon inside the photon sphere.

The natural generalization of the present paper is to consider higher dimensional spacetimes. With some additional assumptions compared to the 4-dimensional case, our results
can be extended to the case of  higher dimensional EMD gravity, which includes  higher dimensional vacuum Einstein and Einstein-Maxwell equations as particular cases. These results will be
presented elsewhere.

\vskip 0.3cm

\noindent {\bf Acknowledgements:} S. Y. would like to thank  the Research Group Linkage Programme of the Alexander von Humboldt Foundation
for the support. The partial support by the COST Action MP1304, by Bulgarian NSF grant DFNI T02/6 and by Sofia University Research Grant N70/2015 is also gratefully acknowledged.

\end{document}